\pgfplotsset{compat=newest} 
\pgfplotsset{plot coordinates/math parser=false} 
\newlength\figureheight 
\newlength\figurewidth 
\algrenewcommand\algorithmicindent{1.0em}%
\theoremstyle{plain}
\newtheorem{theorem}{Theorem}[section]
\theoremstyle{plain}
\theoremstyle{plain}
\theoremstyle{definition}
\newtheorem{definition}[theorem]{Definition}
\theoremstyle{plain}
\newtheorem{lemma}[theorem]{Lemma}
\theoremstyle{plain}
\theoremstyle{plain}
\newtheorem{remark}[theorem]{Remark}
\newtheorem{assumption}[theorem]{Assumption}
\newcommand{\RR}{\mathbb{R}}
\newcommand{\mSetSymMat}[1]{S^{#1}}
\newcommand{\mSetPosSemSymMat}[1]{S^{#1}_+}
\newcommand{\mSetPosSymMat}[1]{S^{#1}_{++}}
\newcommand{\mIntInt}[2]{\mathcal I_{[#1, #2]}}
\newcommand{\mIntGeq}[1]{\mathcal I_{\geq #1}}
\newcommand{\mDataSet}{\mathcal D}
\newcommand{\mDataSetX}{\mathcal D_x}
\newcommand{\mDataSetY}{\mathcal D_y}
\newcommand{\mBoundSamples}{\mathcal P}
\newcommand{\mNorm}[2]{\left\lVert#1\right\rVert_{#2}}
\newcommand{\mDefFunction}[3]{#1: #2 \rightarrow #3}
\newcommand{\mBall}[2]{\mathcal B_{#1}(#2)}
\newcommand{\XX}{\mathbb{X}}
\newcommand{\UU}{\mathbb{U}}
\newcommand{\mSafeSet}{\mathcal{S}}
\newcommand{\mSafeRing}{\mathcal {R}}
\newcommand{\mSafeRingIndices}{\mathbb I_\mSafeRing}
\newcommand{\mDataRegion}{\mathcal A}
\newcommand{\mDataRegionDense}{\mDataRegion_\delta}
\newcommand{\mQuadBoundMatrix}{Q}
\newcommand{\mQuadBoundMatrixFunction}[1]{\mQuadBoundMatrix(#1)}
\newcommand{\mQuadBoundMatrixFunctionIteration}[2]{\mQuadBoundMatrix^{#1}(#2)}
\newcommand{\mGridFunction}{\Delta_{\mDataSetX}}
\DeclareMathOperator*{\argmin}{argmin}
\title{\LARGE \bf
	Scalable synthesis of safety certificates from data with application to
	learning-based control
}
\author{Kim P. Wabersich and Melanie N. Zeilinger
\thanks{Kim Wabersich ({\tt\small wabersich@kimpeter.de}) and Melanie N. Zeilinger
({\tt\small mzeilinger@ethz.ch}) are with the Institute for Dynamic Systems and Control,
ETH Zurich, Switzerland. This work was supported by the Swiss National Science Foundation under grant no. PP00P2 157601/1.}
}
\newcommand\copyrighttext{%
	\footnotesize \textbf{Published in: 2018 European Control Conference (ECC), DOI: 10.23919/ECC.2018.8550288.}\\
	\textcopyright 2018 IEEE. Personal use of this material is permitted. Permission from IEEE must be obtained for all other uses, in any current or
	future media, including reprinting/republishing this material for advertising or promotional purposes, creating new collective works,
	for resale or redistribution to servers or lists, or reuse of any copyrighted component of this work in other works.}
\newcommand\copyrightnotice{%
	\begin{tikzpicture}[remember picture,overlay]
		\node[anchor=south,yshift=3pt] at (current page.south) {\fbox{\parbox{\dimexpr\textwidth-\fboxsep-\fboxrule\relax}{\copyrighttext}}};
	\end{tikzpicture}%
}
\begin{document}

\maketitle
\copyrightnotice
\thispagestyle{empty}
\pagestyle{empty}

\begin{abstract}
\noindent
The control of complex systems faces a trade-off between high performance and
safety guarantees, which in particular restricts the application of
learning-based methods to safety-critical systems. 
A recently proposed framework to address this issue is the use of a safety controller,
which guarantees to keep the system within a safe region of the state space.
This paper introduces efficient techniques for the synthesis of a safe set and
control law, which offer improved scalability properties by relying on
approximations based on convex optimization problems.
The first proposed method requires only an approximate linear system model and
Lipschitz continuity of the unknown nonlinear dynamics.
The second method extends the results by showing how a 
Gaussian process prior on the unknown system dynamics
can be used in order to reduce conservatism of the resulting safe set.
We demonstrate the results with numerical examples, including an
autonomous convoy of vehicles.
\end{abstract}


\section{Introduction}

Digitalization opens new perspectives for control engineering and automation
by making large amounts of data from experiments and numerical models available.
Learning-based control exploits this cumulated knowledge and potentially
also performs autonomous exploration of unseen system behavior in order
to find an optimal control policy. An example is deep reinforcement
learning (RL), providing prominent results, one of which is the application
to Atari Arcade video-games \cite{mnih2015human}.

Compared with traditional control techniques, learning-based methods offer the
potential to reduce modeling and controller design effort.
However, many industrial applications are \emph{safety-critical} systems, i.e.
systems with physical constraints that have to be satisfied.
This essentially limits the application of most available learning-based
control algorithms, which do not provide safety certificates.
In order to address this limitation, we present efficient and scalable methods
for the synthesis of a safety strategy consisting of a safe set and corresponding
safe control law, which are cheap to implement and can be applied together
with existing controllers or modern learning techniques to enhance
them with safety guarantees.

\emph{Contributions:}
We consider dynamical systems with \emph{unknown} Lipschitzian nonlinearity and
formulate the safe set and safe controller synthesis as convex optimization
problems, which directly employ available data. Our analysis considers
an approximate linear model of the system and
uses data to incorporate the unknown nonlinear effects. The computations
are based on Lyapunov's method and result in two optimization problems:
The first optimization problem defines a quadratic approximation of the nonlinearity
in the Lyapunov conditions and the second one describes the computation of
the safe set and controller. 
Similar to
\cite{akametalu2014reachability,fisac2017general} the framework can be
used to augment any desired controller, which is lacking safety guarantees,
in particular one which is based on learning.

We extend the technique to reduce conservatism of the safe set by putting
a prior on the unknown dynamics in the form of a Gaussian process model,
which is beneficial, especially in case of high dimensional systems and
sparse data. Due to its less conservative nature, this extension favors
safe exploration beyond the system behavior seen so far and is well
suited for iteratively learning in closed-loop.

We illustrate the approach using examples,
including a convoy of partly non-cooperative autonomous cars.

\emph{Related work:}
Given its relevance in industrial applications, there has been a growing
interest in safe learning methods in the past years.
Extensions of existing RL methods have been developed to enable safe RL
with respect to different notions of safety, see \cite{garcia2015comprehensive}
for a survey. A detailed literature review regarding RL, focusing on safety
with respect to state and input constraints as also considered in this work,
can be found in \cite{fisac2017general}.
There are few results for efficient controller
tuning from data with respect to best worst-case performance (also worst-case
stability under physical constraints) by Bayesian min-max optimization, see
e.g. \cite{wabersich2015automatic}, or by
safety constrained Bayesian optimization as e.g. in
\cite{Berkenkamp2015SafeRobustLearning,berkenkamp2016safequad}.
In \cite{berkenkamp2016safe} a method was developed that
allows to analyze a given closed-loop system
(under an arbitrary RL algorithm) with respect to safety.

Recent developments include the concept of a supervisory framework,
which consists of a safe set in the state space and a safety controller.
As long as the system state is in the interior of the safe set,
any control law (e.g. unsafe learning-based control) can be applied.
The safety controller only interferes if necessary, in case that the system reaches the
boundary of the safe set, see e.g.
\cite{akametalu2014reachability,fisac2017general}.
Such a framework allows for certifying an arbitrary learning-based
control algorithm with safety guarantees.
Previously proposed techniques are based on a differential game formulation,
which results in solving a min-max optimal control problem.
An active field of research aims at extending these techniques to larger scale
systems, mostly by considering special cases as described, e.g., in \cite{chen2015safe,kaynama2015scalable,fisac2015pursuit}.
For some relevant cases, the existence of analytic
solutions \cite{darbon2016algorithms} has been shown.
The results presented in this paper are based on the concept of a safety
framework, but compared to previous work we focus on approximation techniques 
to improve scalability with respect to the system dimension.

\emph{Structure of the paper:}
In Section
\ref{sec:problem_description} we state the problem and
in Section \ref{sec:safe_sets_from_data} we present our main
result for safe set and controller computation. We then
show an extension using a stronger assumption
on the unknown system dynamics by considering
Gaussian processes in Section \ref{sec:safe_active_exploration_for_nonlinear_systems}
in order to reduce conservatism of the safe set.
The results are demonstrated on numerical examples
within the respective sections.
We conclude the paper in Section \ref{sec:conclusion}.

\emph{Notation:}
The set of symmetric matrices of dimension $n$
is $\mSetSymMat{n}$, the set of positive (semi-)
definite matrices is ($\mSetPosSemSymMat{n}$) $\mSetPosSymMat{n}$, 
the set of integers in the interval $[a,b]\subset\RR$ is
$\mIntInt{a}{b}$, the set of integers in the interval
$[a,\infty)\subset\RR$ is $\mIntGeq{a}$, and for $\epsilon > 0$
let
$\mBall{\epsilon}{\bar x} = \left\lbrace x \in \RR | \mNorm{x-\bar x}{2}\leq \epsilon\right\rbrace$. The boundary of an arbitrary compact set $\mathcal C \subset \RR^n$
is $\partial \mathcal C$.
Given a set $\mDataSet = \lbrace (x_i, y_i) \rbrace_{i=1}^N$, let
$\mDataSetX=\lbrace x_i \rbrace_{i=1}^{N}$ and
$\mDataSetY=\lbrace y_i \rbrace_{i=1}^{N}$. Define
$\mDefFunction{\mGridFunction}{\RR^n}{\mDataSetX}$ as
$\mGridFunction(x)={\argmin_{\bar x\in\mDataSetX}\mNorm{\bar x - x}{2}}$,
which picks the closest element in $\mDataSetX$ with
respect to $x\in\RR^n$ under the $2$-norm.
Given a set $\mDataRegion\subset\RR^n$ and a locally Lipschitz
continuous function $\mDefFunction{f}{\RR^n}{\RR^m}$,
the local Lipschitz constant
$L \leq |f(x) - f(y)|/\mNorm{x-y}{2}$ for all
$x,y\in\mDataRegion$ is denoted by $L_{f(x)}(\mDataRegion)$.
The Minkowski sum of two sets $\mathcal A_1, \mathcal A_2 \subset \RR$
is denoted by $\mathcal A_1 \oplus \mathcal A_2$.


\section{Problem description}\label{sec:problem_description}
\begin{figure}[t]
	\centering
	\input{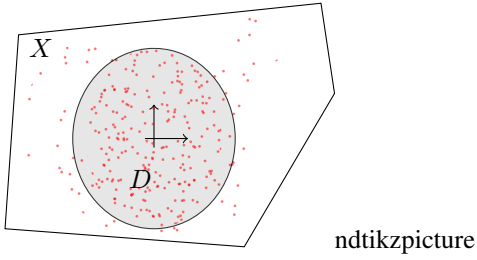}
	\caption{Illustration of Assumption~\ref{ass:big_data}. Red dots display observations
	$(x_i,d(x_i))$.}\label{fig:illustration_Ass_big_data}
\end{figure}
We consider deterministic nonlinear systems of the form
\begin{align}\label{eq:general_system}
  \dot x(t) = Ax(t) + Bu(t) + d(x(t))
\end{align}
where $A\in\RR^{n\times n}$, $B \in \RR^{n\times m}$ and
$d: \RR^{n} \rightarrow \RR^n$ is locally Lipschitz continuous.
The system is subject to polytopic state constraints $x(t) \in \XX
:= \lbrace x\in \RR^n|A_x x \leq b_x \rbrace$, 
$A_x\in\RR^{n_x\times n}$, $b_x\in\RR^{n_x}$ and polytopic input
constraints $u(t)\in\UU:=\lbrace u\in\RR^m|A_u u \leq b_u \rbrace$,
$A_u\in\RR^{n_u\times m}$, $b_u\in\RR^{n_u}$.
The origin is contained in $\XX$, $(A,B)$ is controllable, and
the system state is fully observable.

The explicit form of the nonlinearity $d$ is unknown, hence $d$ is assumed to be
a memoryless black-box function, which will be identified from system measurements.
However, we assume that $B$, i.e. the influences
of the control input, are known. The matrix $A$ incorporates
system knowledge in form of a linear model, which will be used in
the design procedure in Section \ref{subsec:choice_of_P}. The linear model
can, e.g., be selected as an approximate system model.
For identification of $d(x)$, we have access to finitely many observations
$\mDataSet=\lbrace(x_i,d(x_i))\rbrace_{i=1}^{N}$ that
fulfill the following property (Figure~\ref{fig:illustration_Ass_big_data}).
\begin{assumption}\label{ass:big_data}
	Given a set $\mDataSet=\lbrace (x_i, d(x_i)) \rbrace_{i=1}^N$
	with $N$ data tuples, where $x_i\in\XX$, there exists
	a non-trivial subset $\mDataRegionDense \subseteq \XX$
	such that for any $x\in\mDataRegionDense$ there exists an $x_i\in\mDataSetX$
	such that $||x-x_i||_2 \leq \delta$.
\end{assumption}
\begin{remark}
	Assumption~\ref{ass:big_data} implies that there exists a
	region, in which the collected data samples are dense.
	Intuitively, one can think of $\delta$ together with the
	Lipschitz constant $L$, as a `measure of knowledge'
	that we have about $d(x)$ inside $\mDataRegionDense$. The `knowledge'
	increases as $\delta$ gets smaller.
\end{remark}
\begin{remark}\label{rem:noisy_data}
	For simplicity, we assume noise-free data $\mDataSet$.
	It would, however, be possible to incorporate bounded or stochastic
	noise with only minor changes.
\end{remark}
We consider the problem of providing a safety certificate
for an arbitrary control law by means of a safe set and controller,
as proposed in \cite{akametalu2014reachability,fisac2017general}.
Consider a potentially unsafe control strategy $\bar u(t)$, obtained for example
by application of RL (which often cannot
guarantee constraint satisfaction). In order to
achieve minimal interference with the desired control $\bar u(t)$,
the goal is to compute a set of states $\mSafeSet$, for which we know 
a control strategy $u_\mSafeSet(t)$ such that input and state constraints
will be satisfied for all future times, in particular considering
that $d(x)$ is unknown. The control $\bar u(t)$ can then safely be applied
in the interior of S, until it becomes necessary to take a safety-ensuring
action $u_\mSafeSet$ on the boundary of $\mSafeSet$ which guarantees 
that we stay in $\mSafeSet$, i.e. that we can still provide a safe control strategy
in the future. More formally:

\begin{definition}\label{def:safe_set}
	A set $\mSafeSet\subseteq \XX$ is called a \emph{safe set} for
	system \eqref{eq:general_system} if there exists a
	\emph{safe control law} $\mDefFunction{u_\mSafeSet}{\mSafeSet}{\UU}$
	such that for an arbitrary (learning-based) policy $\bar u(t)$,
	the safe controller
	\begin{align}\label{eq:safe_control_law}
		u(t) = 
			\begin{cases}
				u_S(x(t)),~&x(t)\in\partial\mSafeSet \lor \bar u(t) \notin \UU\\
				\bar u(t),~&\text{otherwise}
			\end{cases}
	\end{align}
	guarantees that the state $x(t)$ is contained in $\mSafeSet$ for all $t>0$ if
	$x(0)\in \mSafeSet$.
\end{definition}

In particular, we aim at finding an algorithm that
scales well in computational complexity with respect to
the dimensionality of system \eqref{eq:general_system}
as well as the number of measurements.



\section{Safe-sets for nonlinear systems from data}\label{sec:safe_sets_from_data}
We first introduce the class of safe-sets considered. Afterwards, we motivate
the proposed method and highlight its basic idea using an example,
in order to then introduce the algorithm for safe set and controller computation
in the remainder of the section.

\subsection{Ellipsoidal safe set}

In order to provide a scalable optimization-based approach, we restrict 
the form of the safe set to an ellipsoidal set of the form
\begin{align}\label{eq:quadratic_safe_set}
  \mSafeSet^P(\gamma) = \left\lbrace x\in\XX | x^\top P x \leq \gamma \right\rbrace
\end{align}
with $P\in\mSetPosSymMat{n}, \gamma\in\RR^n, \gamma >0$, and the safe controller
to the class of linear state feedback control laws $u_\mSafeSet=Kx$ with
$K\in \RR^{m\times n}$.
To construct $\mSafeSet^P(\gamma)$, we leverage Lyapunov's direct method,
with a quadratic Lyapunov function
$V(x)=x^\top (\gamma^{-1}P) x$.
By standard Lyapunov arguments, the following sufficient conditions
ensure, analogously to \cite[Lemma 1]{akametalu2014reachability},
that $\mSafeSet^P(\gamma)$ fulfills Definition~\ref{def:safe_set}:
\begin{subequations}
\begin{align}
\label{eq:nominal_invariant_set_req_X}
  \mSafeSet^P(\gamma) &\subseteq \XX\\\label{eq:nominal_invariant_set_req_U}
  Kx&\in \UU \\\label{eq:nominal_invariant_set_req_vdot}
  \qquad \dot V(x) &\leq 0
\end{align}
\end{subequations}
for all $x\in\partial\mSafeSet^P(\gamma)$.

\subsection{Motivating Example}
Consider the system $\dot x = x +d(x) + u$ with $d(x)=- x^3$
subject to input constraints $|u|\leq 2$ and state constraints
$|x|\leq 2$. The task is to find a \emph{safe} interval $\mSafeSet=[a,b]$
and a corresponding \emph{safe control law}
$\mDefFunction{u_\mSafeSet}{[-2,2]}{[-2,2]}$ according to Definition~\ref{def:safe_set}.
The nonlinearity $d(x)$ is unknown, but we are
given a set of noise-free observations $\mDataSet=\lbrace x_i, d(x_i) \rbrace_{i=1}^N$
such that the convex hull $\mathrm{conv}(\lbrace x_i \rbrace_{i=1}^N )$
equals the state space $[-2,2]$ (this will not be a necessary assumption later,
see also Assumption~\ref{ass:big_data}). We consider a linear state feedback $u_\mSafeSet=kx$, $k\in \RR$.

\textit{Robust approach}: Without any knowledge of $d(x)$, a robust approach is
to consider the dynamics $\dot x = x + w + u$, with $|w|\leq 8$, where the bound on $w$ is
estimated from $\mDataSet$. In this case, there does not exist a
feasible controller gain k w.r.t. input constraints for any state
and $\forall |w|\leq 8$, such that $\dot x \leq 0$, i.e. there does not
exist a safe set.

\textit{Proposed approach}: Let $V(x)=px^2,~p>0$ be our Lyapunov candidate function.
We analyze $\dot V(x) = 2p(x^2+k x^2+ x d( x))\leq 0$.
At the boundary of the state space we have the measurements $2d(2)=-2d(-2)=-16$,
providing that $\dot V (x) \leq 0$. 
By standard Lyapunov arguments, this implies that for $k=0$ we have that
for all $x(0)\in \partial \mSafeSet=\lbrace -2, 2\rbrace$,
$x(t)\in \mSafeSet$ for all $t>0$.
We conclude that $u_\mSafeSet(t)=0$ is a safe controller for
which the state constraint set constitutes a safe set.

This example highlights that rather than taking uniform bounds
on the unknown dynamics, we can provide less conservative safe sets
by quantifying the effect of the unknown dynamics in the form of
\emph{state-dependent} disturbances, which can be inferred from the
available data $\mDataSet$.
In the following, we will exploit this concept for safe set and controller
computations and conclude the section with two examples.

\subsection{Computation of the safe set and controller}
Given a matrix $P$ (see Section \ref{subsec:choice_of_P})
that determines the shape of the safe set, we write the
problem of finding the size of the safe set $\mSafeSet^P(\gamma)$
and a corresponding safe controller $u_\mSafeSet$ as two
consecutive convex optimization problems.
\begin{remark}\label{rem:ellipsoidal_data_region}
	Given the shape $P$ of the safe set \eqref{eq:quadratic_safe_set},
	there exists a $\bar \gamma >0$ small enough such that
	Assumption~\ref{ass:big_data} is satisfied on the sub-level
	set $\mSafeSet^P(\bar \gamma)$, i.e.
	\begin{align}\label{eq:ellipsoidal_data_region}
		\mSafeSet^P(\bar\gamma)\subseteq \mDataRegionDense.
	\end{align}
\end{remark}
The proposed procedure first uses data to bound the effects of
the nonlinearity on the Lyapunov decrease by a quadratic form
in the largest possible safe set, i.e. over $\gamma\in (0,\bar \gamma]$.
This quadratic bound is then used
as input to the second optimization problem, which computes
the controller and set size in order to take into account the nonlinearities
in addition to the linear system dynamics. The restriction to a quadratic bound
of the nonlinearity is motivated by the fact 
that it can be treated efficiently by means of a convex
problem.

In order to reduce conservatism, we bound the nonlinearity 
on sub-regions of the safe set, described by intervals
\begin{align}\label{eq:intervals}
	\gamma\in\Gamma_i=[\gamma_1^i, \gamma_2^i],~\gamma_1^i<\gamma_2^i,~ \gamma_2^i\leq \bar \gamma,~i=1,2,..,n_\Gamma,
\end{align}
which are defined such that $\mSafeSet^P(\gamma)\subseteq \mDataRegionDense$
for any $\gamma\in\Gamma_i$.
Note that the selection of sub-intervals is possible as we will use the quadratic bound for
upper bounding \eqref{eq:nominal_invariant_set_req_vdot}, which is only
required to hold on $\partial \mSafeSet^P(\gamma)$.
For every interval $\Gamma_i$, we then formulate
two convex optimization problems in order to determine the volume of the safe
set and the safe controller. In case no solution exists,
the interval can be reduced. In general, the smaller the intervals are chosen,
the less conservative the bound will be.

\emph{Bounding the nonlinear effects:} 
Given an interval $\Gamma_i$, consider
the neighborhood
$\mSafeRing(\Gamma_i)=
\lbrace \mSafeSet^P(\gamma_i^2)\setminus
\mSafeSet^P(\gamma_i^1)\rbrace \oplus \mBall{\delta}{0}$.
The indices of data samples inside the set $\mSafeRing(\Gamma_i)$
are given by $\mSafeRingIndices(\Gamma_i) = 
\lbrace k \in \mIntGeq{1}|x_k\in\mDataSetX,~ x_k\in \mSafeRing(\Gamma_i)\rbrace$.
We seek to find a quadratic bound on the nonlinearity arising in the
Lyapunov decrease \eqref{eq:nominal_invariant_set_req_vdot} for all
$\bar x \in \mSafeSet^P(\gamma_i^2)\setminus\mSafeSet^P(\gamma_i^1)$, i.e.
to find a $\mQuadBoundMatrixFunction{\Gamma_i}$ such that
\begin{align}\nonumber
	 \dot V(\bar x)=&2\gamma^{-1} {\bar x}^\top P(A+BK){\bar x} + 2\gamma^{-1} {\bar x}^\top Pd({\bar x}) \\\label{eq:general_opt_proof_2}
	 \leq &2\gamma^{-1}{\bar x}^\top P(A+BK){\bar x} + 2\gamma^{-1}{\bar x}^\top \mQuadBoundMatrixFunction{\Gamma_i} {\bar x}.
\end{align}
The first optimization problem for bounding the nonlinearity over each interval is
given by 
\begin{subequations}\label{eq:quad_bound}
	\begin{align}
		\mQuadBoundMatrixFunction{\Gamma_i} = &\argmin_{\tilde \mQuadBoundMatrix \in \mSetSymMat{n}} \sum_{k \in \mSafeRingIndices(\Gamma)} \left(x_k^\top
		\tilde \mQuadBoundMatrix x_k - p_k\right)^2\\\nonumber
		\text{s.t.}&\text{ for all } k\in \mSafeRingIndices(\Gamma_i):
		\\\label{eq:quad_bound_2}
		&~\lambda_k \geq 0
		\\\label{eq:quad_bound_3}
		&~
			\begin{pmatrix}
				-\tilde \mQuadBoundMatrix - \lambda_k I_n & \lambda_k x_k\\ \lambda_k x_k^\top & -\lambda_k\left(x_k^\top x_k - \delta^2\right) + p_k
			\end{pmatrix}\preceq 0
	\end{align}
\end{subequations}
with $p_k=x_k^\top P d(x_k) + \delta L_{x^\top P d(x)}(\mSafeRing(\Gamma_i))$
and $\delta$ as defined in Assumption~\ref{ass:big_data}.
\begin{lemma}\label{lem:quad_bound}
	Let Assumption~\ref{ass:big_data} hold and let $\bar \gamma$
	satisfy \eqref{eq:ellipsoidal_data_region}. Consider an
	interval $\Gamma_i$ according to \eqref{eq:intervals}.
	If \eqref{eq:quad_bound} attains a solution, then for all $\bar x \in
	\mSafeSet^P(\gamma_i^2)\setminus\mSafeSet^P(\gamma_i^1)$ it holds that
	\begin{align}\label{eq:quad_bound_result}
		\bar x^\top  P d(\bar x) \leq \bar x^\top \mQuadBoundMatrix (\Gamma_i) \bar x.
	\end{align}
\end{lemma}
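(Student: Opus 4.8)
The plan is to establish the pointwise bound $\bar x^\top P d(\bar x) \leq \bar x^\top \mQuadBoundMatrix(\Gamma_i) \bar x$ at an arbitrary point $\bar x$ in the ring $\mSafeSet^P(\gamma_i^2)\setminus\mSafeSet^P(\gamma_i^1)$ by first transferring the inequality from $\bar x$ to a nearby data point, and then showing that the semidefinite constraint \eqref{eq:quad_bound_3} forces the quadratic form to dominate the transferred value. First I would fix $\bar x$ in the ring and locate a data point $x_k$ with $\mNorm{\bar x - x_k}{2}\leq\delta$. Such a point exists by Assumption~\ref{ass:big_data}: since $\bar x\in\mSafeSet^P(\gamma_i^2)\subseteq\mSafeSet^P(\bar\gamma)\subseteq\mDataRegionDense$ by \eqref{eq:ellipsoidal_data_region}, there is some $x_i\in\mDataSetX$ within distance $\delta$, and by construction of the enlarged ring $\mSafeRing(\Gamma_i)=\{\mSafeSet^P(\gamma_i^2)\setminus\mSafeSet^P(\gamma_i^1)\}\oplus\mBall{\delta}{0}$ this point satisfies $x_k\in\mSafeRing(\Gamma_i)$, hence $k\in\mSafeRingIndices(\Gamma_i)$ and the optimization actually "sees" it.

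Next I would use the local Lipschitz property of the scalar map $x\mapsto x^\top P d(x)$ on the region $\mSafeRing(\Gamma_i)$. By the definition of $L_{x^\top P d(x)}(\mSafeRing(\Gamma_i))$ in the notation section and the fact that both $\bar x$ and $x_k$ lie in $\mSafeRing(\Gamma_i)$, we get
\begin{align}\nonumber
	\bar x^\top P d(\bar x) &\leq x_k^\top P d(x_k) + L_{x^\top P d(x)}(\mSafeRing(\Gamma_i))\,\mNorm{\bar x - x_k}{2}\\\nonumber
	&\leq x_k^\top P d(x_k) + \delta\,L_{x^\top P d(x)}(\mSafeRing(\Gamma_i)) = p_k,
\end{align}
which is exactly the quantity appearing in the objective and in the $(n+1,n+1)$ entry of the matrix in \eqref{eq:quad_bound_3}. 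So it suffices to prove $p_k \leq \bar x^\top \mQuadBoundMatrix(\Gamma_i)\bar x$, i.e. to certify that the quadratic form $\bar x^\top\tilde\mQuadBoundMatrix\bar x$ exceeds the scalar $p_k$ for every $\bar x$ in the ball $\mBall{\delta}{x_k}$, and then restrict to the specific $\bar x$ at hand.

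The heart of the argument is interpreting the linear matrix inequality \eqref{eq:quad_bound_3} via the S-procedure. The constraint that $\bar x^\top\tilde\mQuadBoundMatrix\bar x \geq p_k$ hold for all $\bar x$ satisfying $\mNorm{\bar x - x_k}{2}^2\leq\delta^2$ — equivalently $\bar x^\top\bar x - 2 x_k^\top\bar x + x_k^\top x_k - \delta^2\leq 0$ — is, after introducing a multiplier $\lambda_k\geq 0$, precisely the feasibility of an LMI in the homogenized variable $(\bar x, 1)$. I would write out the S-procedure certificate: $-\bar x^\top\tilde\mQuadBoundMatrix\bar x + p_k \leq \lambda_k(\bar x^\top\bar x - 2x_k^\top\bar x + x_k^\top x_k - \delta^2)$ for all $\bar x$, which in matrix form with the vector $(\bar x^\top, 1)^\top$ is exactly the negative-semidefiniteness condition \eqref{eq:quad_bound_3}. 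Thus feasibility of the LMI together with $\lambda_k\geq 0$ guarantees $\bar x^\top\tilde\mQuadBoundMatrix\bar x\geq p_k$ on the ball, and in particular at our chosen $\bar x$. Chaining this with the Lipschitz estimate yields \eqref{eq:quad_bound_result}.

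I expect the main obstacle to be the bookkeeping of the S-procedure sign conventions and the homogenization step — making sure the off-diagonal blocks $\lambda_k x_k$ and the constant term $-\lambda_k(x_k^\top x_k-\delta^2)+p_k$ match the expansion of $\lambda_k\mNorm{\bar x-x_k}{2}^2$ correctly, and that the $-\lambda_k I_n$ term in the top-left block absorbs the $\bar x^\top\bar x$ coefficient. A secondary subtlety worth stating explicitly is that \eqref{eq:quad_bound_3} is imposed for every $k\in\mSafeRingIndices(\Gamma_i)$, so the bound holds uniformly; for the pointwise conclusion we only need the single constraint indexed by the $x_k$ nearest to $\bar x$, which we already guaranteed lies in the index set. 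The Lipschitz transfer step is routine, and the existence of a valid $x_k$ follows directly from Assumption~\ref{ass:big_data} combined with the Minkowski-sum definition of $\mSafeRing(\Gamma_i)$, so the crux is the S-procedure reformulation.
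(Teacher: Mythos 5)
Your proof is correct and follows essentially the same route as the paper's: a Lipschitz transfer from $\bar x$ to a nearby data point $x_k\in\mSafeRing(\Gamma_i)$ yielding $\bar x^\top P d(\bar x)\leq p_k$, followed by the S-Lemma reading of \eqref{eq:quad_bound_2}--\eqref{eq:quad_bound_3} to certify $p_k\leq \bar x^\top \mQuadBoundMatrix(\Gamma_i)\bar x$ on $\mBall{\delta}{x_k}$. The only (harmless) difference is that you argue pointwise at an arbitrary $\bar x$ while the paper phrases the same chain via a covering of the ring by the balls $\mBall{\delta}{x_k}$, and you spell out more explicitly why the nearest data point lands in the index set $\mSafeRingIndices(\Gamma_i)$.
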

\begin{proof}
	We prove that for all $\bar x \in
	\mSafeSet^P(\gamma_i^2)\setminus\mSafeSet^P(\gamma_i^1)$
	the implication
	$\eqref{eq:quad_bound_2},\eqref{eq:quad_bound_3}\Rightarrow \eqref{eq:quad_bound_result}$ holds.
	First note that for any $\bar x \in\mSafeSet^P(\gamma_i^2)\setminus\mSafeSet^P(\gamma_i^1)$ there exists
	a $k\in \mSafeRingIndices(\Gamma_i)$ such that $||\bar x-x_k||\leq \delta$
	by the definition of the intervals, i.e. $\gamma_i^2\leq \bar \gamma$, see also
	Remark \ref{rem:ellipsoidal_data_region}. For notational ease let
	$f(\bar x) = d(\bar x)^\top P \bar x$. Equation \eqref{eq:quad_bound_result} reads
	$f(\bar x)-\bar x^\top\mQuadBoundMatrix(\Gamma_i)\bar x\leq 0 $.
	For all $k\in\mSafeRingIndices(\Gamma_i)$ and for all $\bar x_k \in \mBall{\delta}{x_k}$
	we have therefore by Lipschitz continuity
	$f(\bar x_k) -  f(\mGridFunction{(\bar x_k)})+ f(\mGridFunction{(\bar x_k)}) - \bar x_k^\top \mQuadBoundMatrix(\Gamma_i)\bar x_k
	\leq p_k - \bar x_k^\top \mQuadBoundMatrix(\Gamma_i)\bar x_k$.
	Note that by the definition of the intervals and Remark \ref{rem:ellipsoidal_data_region} the relation
	$\lbrace\mSafeSet^P(\gamma_i^2)\setminus\mSafeSet^P(\gamma_i^1)\rbrace\subset\bigcup_{i\in\mSafeRingIndices(\Gamma_i)}\mBall{\delta}{x_k}$ holds.
	As a consequence, if for all $k\in\mSafeRingIndices(\Gamma_i)$ and for all
	$\bar x_k \in \mBall{\delta}{x_k}$ we have that $p_k - \bar x_k^\top \mQuadBoundMatrix(\Gamma_i)\bar x_k\leq 0$,
	then the quadratic bound \eqref{eq:quad_bound_result} holds for all
	$\bar x\in\mSafeSet^P(\gamma_i^2)\setminus\mSafeSet^P(\gamma_i^1)$. Finally, using the
	S-Lemma (see \cite{polik2007survey}) the condition
	$
		\bar x \in \mBall{\delta}{x_k}
		\Rightarrow p_i-\bar x^\top Q(\gamma)\bar x \leq 0
	$
	is equal to \eqref{eq:quad_bound_2},\eqref{eq:quad_bound_3} which completes the proof.
\end{proof}
\begin{remark}\label{rem:large_data_set_quad_bound}
The optimization problem in \eqref{eq:quad_bound} is a convex semidefinite programming problem.
In case that there are more observations ($N\gg0$) than the optimization algorithm can handle in \eqref{eq:quad_bound},
one can iteratively calculate $\mQuadBoundMatrixFunction{\Gamma_i}$: Solve
\eqref{eq:quad_bound} using a subset of $\mDataSet$ in order to obtain
$\mQuadBoundMatrixFunctionIteration{1}{\Gamma_i}$, in the next iteration choose another disjoint subset of
$\mDataSet$ and add the constraint $\tilde \mQuadBoundMatrix\succeq \mQuadBoundMatrixFunctionIteration{1}{\Gamma_i}$
to \eqref{eq:quad_bound} in order to obtain $\mQuadBoundMatrixFunctionIteration{2}{\Gamma_i}$. Repeat
until all subsets of $\mDataSet$ are processed which yields a feasible, possibly suboptimal
solution to \eqref{eq:quad_bound}.
\end{remark}
Problem (8) provides a  bound on the nonlinear effect in the Lyapunov decrease
by means of the Lipschitz constant of $x^\top Pd(x)$. In practice, a local Lipschitz constant can
e.g. be obtained from data as
$\hat L_{x^\top P d(x)}(\mSafeRing(\Gamma_i)) = 2\max_{k_1,k_2\in\mSafeRingIndices(\Gamma_i)} ||x_{k_1}^\top Pd(x_{k_1})-x_{k_2}^\top Pd(x_{k_2})||/||x_{k_1} - x_{k_2}||$.

Since $Q(\Gamma_1)$ does not have to be positive definite,
stabilizing effects of the nonlinearity can be considered in \eqref{eq:general_opt_proof_2},
i.e. $\bar x^\top Q(\Gamma_i) \bar x$ can be negative and therefore contribute
to rendering $\dot V(x)$ negative on the boundary of the safe set.
This is demonstrated in Section~\ref{sec:simple_example_lipschitz}, see also
Figure~\ref{fig:example} (right). \medskip

\emph{Calculation of safe level set and safe controller:} By using the
quadratic bound from the first optimization problem, we are
able to state the second optimization problem for safe set
and controller design satisfying the conditions
\eqref{eq:nominal_invariant_set_req_X}-\eqref{eq:nominal_invariant_set_req_vdot}
as follows. Let $\gamma \in \RR$, $E = P^{-1}$, $Y \in \RR^{m\times n}$, 
$\Gamma_i$ be chosen according to \eqref{eq:intervals}. The optimization problem
is given by
\begin{subequations}\label{eq:general_opt}
\begin{align}
	\min_{\gamma,Y}  & -\gamma \\\label{eq:general_opt_0}
	\mathrm{s.t.} &~ \gamma\in \Gamma_i\\\label{eq:general_opt_1}
	&~ AE\gamma + EA^T\gamma + BY + Y^\top B^\top + 2E \mQuadBoundMatrixFunction{\Gamma_i} E\gamma \preceq 0\\\nonumber
	&~\forall j\in\mIntInt{0}{n_x}:\\\label{eq:general_opt_5}
	&~~~~\begin{pmatrix}b_{x,j}^2 & A_{x,j}E \\ EA_{x,j}^\top & \gamma E\end{pmatrix}\succeq 0\\\nonumber
	&~\forall k\in\mIntInt{0}{n_u}: \\\label{eq:general_opt_6}
	&~~~~\begin{pmatrix}b_{u,k}^2 & A_{u,k}Y \\ Y^\top A_{u,k}^\top & \gamma E\end{pmatrix}\succeq 0.
\end{align}
\end{subequations}

\begin{theorem}\label{thm:invariant_set_from_data_lipschitz}	
	Let Assumption~\ref{ass:big_data} hold and let $\bar \gamma$
	satisfy \eqref{eq:ellipsoidal_data_region}. Consider an	interval
	$\Gamma_i$ according to \eqref{eq:intervals}. If \eqref{eq:general_opt}
	attains a solution $\lbrace \gamma^*, Y^*\rbrace$, then
	$\mSafeSet^P(\gamma^*)$ is a safe set for system \eqref{eq:general_system}
	according to Definition~\ref{def:safe_set} with $u_S(x) = Kx$,
	$K={\gamma^*}^{-1}Y^*E^{-1}$.
\end{theorem}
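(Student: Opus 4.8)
The plan is to show that the optimizer $\lbrace\gamma^*,Y^*\rbrace$, together with $K={\gamma^*}^{-1}Y^*E^{-1}$ and $u_\mSafeSet(x)=Kx$, makes $\mSafeSet^P(\gamma^*)$ satisfy the three sufficient conditions \eqref{eq:nominal_invariant_set_req_X}--\eqref{eq:nominal_invariant_set_req_vdot} on the boundary $\partial\mSafeSet^P(\gamma^*)$. By the Lyapunov argument already invoked after \eqref{eq:nominal_invariant_set_req_vdot} (analogous to \cite[Lemma 1]{akametalu2014reachability}), this is exactly what is needed for $\mSafeSet^P(\gamma^*)$ to be a safe set in the sense of Definition~\ref{def:safe_set} under the switched law \eqref{eq:safe_control_law}. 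Each of the three constraint groups in \eqref{eq:general_opt} is matched to one of these conditions after the change of variables $E=P^{-1}$ and the identity $Y=\gamma KE$, which follows directly from $K=\gamma^{-1}YE^{-1}$.

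For the decrease condition \eqref{eq:nominal_invariant_set_req_vdot} I would start from $\dot V(x)=2\gamma^{-1}x^\top P(A+BK)x+2\gamma^{-1}x^\top Pd(x)$. Because \eqref{eq:general_opt_0} forces $\gamma^*\in\Gamma_i$ with $\gamma_i^2\le\bar\gamma$, every boundary point $x$ with $x^\top Px=\gamma^*$ lies in the ring $\mSafeSet^P(\gamma_i^2)\setminus\mSafeSet^P(\gamma_i^1)$ on which Lemma~\ref{lem:quad_bound} certifies $x^\top Pd(x)\le x^\top\mQuadBoundMatrixFunction{\Gamma_i}x$. Substituting this bound and symmetrizing the linear term gives $\dot V(x)\le\gamma^{-1}x^\top[P(A+BK)+(A+BK)^\top P+2\mQuadBoundMatrixFunction{\Gamma_i}]x$, so it suffices to enforce $P(A+BK)+(A+BK)^\top P+2\mQuadBoundMatrixFunction{\Gamma_i}\preceq0$. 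A congruence transformation with $E=P^{-1}\succ0$ (which preserves definiteness), together with $EP=I$ and $KE=\gamma^{-1}Y$, turns this into $AE+EA^\top+\gamma^{-1}(BY+Y^\top B^\top)+2E\mQuadBoundMatrixFunction{\Gamma_i}E\preceq0$, and scaling by $\gamma>0$ reproduces \eqref{eq:general_opt_1} verbatim.

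The state and input conditions \eqref{eq:nominal_invariant_set_req_X} and \eqref{eq:nominal_invariant_set_req_U} I would obtain from the standard S-procedure / Schur-complement characterization of ellipsoidal containment: the ellipsoid $\mSafeSet^P(\gamma)$, written with shape matrix $\gamma E$, is contained in a halfspace $\lbrace x:A_{x,j}x\le b_{x,j}\rbrace$ precisely when the associated $2\times2$ block matrix is positive semidefinite; applying this to every row of $(A_x,b_x)$ yields \eqref{eq:general_opt_5} and hence $\mSafeSet^P(\gamma^*)\subseteq\XX$. The same argument applied to the image $K\mSafeSet^P(\gamma)$, with $u=Kx$ and $Y=\gamma KE$ so that $A_{u,k}Kx$ is expressed through $A_{u,k}Y$, gives \eqref{eq:general_opt_6} and hence $Kx\in\UU$ on the set. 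Since $b_x,b_u>0$ (the origin lies in the interior of the constraint sets), the Schur complements against the block $\gamma E\succ0$ are well posed.

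The step I expect to be most delicate is the decrease condition: one must verify that $\partial\mSafeSet^P(\gamma^*)$ genuinely lies in the ring where Lemma~\ref{lem:quad_bound} is valid (this is exactly where $\gamma^*\in\Gamma_i$, $\gamma_i^2\le\bar\gamma$ and Remark~\ref{rem:ellipsoidal_data_region} enter), and that the congruence and the $\gamma$-scaling are carried out consistently so that \eqref{eq:general_opt_1} is recovered with the correct powers of $\gamma$; the polytopic Schur reformulations are routine but require the same care with the $\gamma$-scaling. Once \eqref{eq:nominal_invariant_set_req_X}--\eqref{eq:nominal_invariant_set_req_vdot} hold on the boundary, invariance of the sublevel set $\mSafeSet^P(\gamma^*)$ under \eqref{eq:safe_control_law} follows from the standard Lyapunov sublevel-set argument, which completes the proof that $\mSafeSet^P(\gamma^*)$ is a safe set with $u_\mSafeSet(x)=Kx$.
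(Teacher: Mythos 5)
Your proposal is correct and follows essentially the same route as the paper's own proof: the Schur-complement/ellipsoid-containment argument for \eqref{eq:general_opt_5}--\eqref{eq:general_opt_6} yielding \eqref{eq:nominal_invariant_set_req_X}--\eqref{eq:nominal_invariant_set_req_U}, and the congruence transformation by $E=P^{-1}$ (with $KE=\gamma^{-1}Y$) converting the Lyapunov-decrease matrix inequality, after substituting the bound from Lemma~\ref{lem:quad_bound}, into \eqref{eq:general_opt_1}. Your explicit check that $\partial\mSafeSet^P(\gamma^*)$ lies in the ring where Lemma~\ref{lem:quad_bound} applies is a point the paper treats only implicitly, but it does not change the argument.
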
 
\begin{proof}
	We prove the result in two steps:
	1) Conditions \eqref{eq:general_opt_5}-\eqref{eq:general_opt_6} imply
	(\ref{eq:nominal_invariant_set_req_X}) and (\ref{eq:nominal_invariant_set_req_U}).
	By \cite[Section 5.2.2]{boyd1994linear} we can rewrite (\ref{eq:nominal_invariant_set_req_U})
	as $A_{u,i}K(\gamma^{-1}P)^{-1}K^\top A_{u,i}^\top \leq b_{u,i}^2$ which equals
	\eqref{eq:general_opt_6}. The matrix inequality for the states can be derived similarly.
	2) Conditions \eqref{eq:general_opt_0}-\eqref{eq:general_opt_1} ensure
	that $\mSafeSet^P(\gamma^*)$ fulfills \eqref{eq:nominal_invariant_set_req_vdot}.
	For all $x \in \partial \mSafeSet^P(\gamma)$ we have to fulfill
	\eqref{eq:nominal_invariant_set_req_vdot} which is implied by \eqref{eq:general_opt_proof_2}.
	A sufficient condition for \eqref{eq:nominal_invariant_set_req_vdot} is therefore
	$\gamma^{-1}P(A+BK) + \gamma^{-1}(A+BK)^\top P + 2\gamma^{-1} \mQuadBoundMatrixFunction{\Gamma_i} \preceq 0$,
	i.e. that $\dot V(x)\leq 0$ for all $x\in\RR^n$.
	Multiplying from left and right by $\gamma P^{-1}$ yields
    \eqref{eq:general_opt_1}.
	We have shown that $\mSafeSet^P(\gamma^*)$ is a safe set according to Definition~\ref{def:safe_set}.
	The objective in \eqref{eq:general_opt} yields the largest safe set under these sufficient
	conditions.
\end{proof}
Note that optimizing over $P$ and $K$ in \eqref{eq:general_opt} is not
possible, because the bound obtained in the first optimization
step depends on $P$.

Problem \eqref{eq:quad_bound} and \eqref{eq:general_opt} provide (semidefinite) convex optimization problems for
computing a safe set and controller by directly employing data points. Such problems can be solved
efficiently even for higher dimensions, see e.g. \cite{toh1999sdpt3,mosek}.
While this offers a general approach with favorable scalability properties, the limitation is that the resulting safe set cannot be larger than the convex hull of the data points plus
a $\delta$-neighborhood. Exploration is therefore limited.
Nevertheless, initially collected data can be iteratively extended inside $\mDataRegionDense$
such that $\delta$ from Assumption~\ref{ass:big_data} gets smaller over time.
Recomputation of the safe set can then reduce conservatism.
We present an extension in Section \ref{sec:safe_active_exploration_for_nonlinear_systems}
which further reduces conservatism and improves exploration. 

\subsection{Shape of the safe set}\label{subsec:choice_of_P}
Using the linear model of the system dynamics in \eqref{eq:general_system},
we define an approximate initial shape matrix $P$ of the safe set by
neglecting the unknown nonlinearity. Assume that
$\mDataRegionDense$ is given by $\lbrace x\in \RR^n | x^\top A_\delta x\leq 1 \rbrace$
with $A_\delta\in\mSetPosSymMat{n}$, which can be e.g. calculated
as the minimum volume covering ellipse of the data points $\mDataSet$,
see \cite[p. 222]{boyd2004convex}. We can find a safe
set for system \eqref{eq:general_system} by setting $d(x)=0$,
resulting in the following optimization problem with
$E \in \mSetPosSymMat{n}$ and $Y \in \RR^{m\times n}$
\begin{subequations}\label{eq:intial_guess_P}
	\begin{align}
		\min_{E,Y}  & -\mathrm{log det} (E) \\\label{eq:intial_guess_P_0}
		\mathrm{s.t.} & ~~A_\delta^{-1}\succeq E \\\label{eq:intial_guess_P_1}
		& ~~AE + EA^T + BY + Y^\top B^\top \preceq 0\\
		& ~~(\ref{eq:general_opt_5}),(\ref{eq:general_opt_6}).
	\end{align}
\end{subequations}
If \eqref{eq:intial_guess_P} attains a solution, then analogously to Theorem~\ref{thm:invariant_set_from_data_lipschitz},
we obtain a safe set for system \eqref{eq:general_system} according to Definition~\ref{def:safe_set}
with $P=E^{-1}$, $u_S(x) = Kx$, and $K=YE^{-1}$, but for $d(x)=0$. Constraint \eqref{eq:intial_guess_P_0}
ensures that the safe set is a subset of $\mDataRegionDense$, i.e. the set in which
Assumption~\ref{ass:big_data} is satisfied and therefore the data is dense.
This implies that $(0,1]$, i.e. $\bar \gamma=1$, is the maximum set size
such that Assumption~\ref{ass:big_data} as well as all constraints are fulfilled.
The optimization problem \eqref{eq:general_opt} then aims at improving
the initial approximation obtained via \eqref{eq:intial_guess_P} with respect
to the nonlinearity by designing a different control law and safe set size.

\subsection{Illustrative numerical example}\label{sec:simple_example_lipschitz}
\begin{figure}[t]
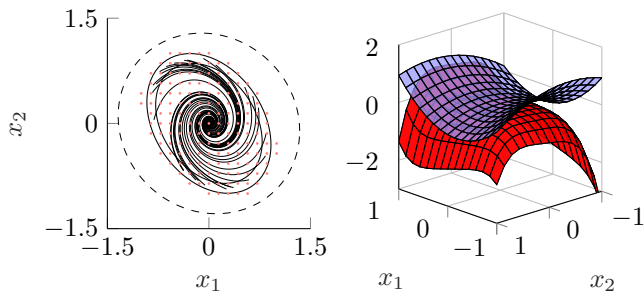
%
    \centering
    \subfloat{\input{fig/tkiz/simpleExample.tkiz}}%
    \subfloat{\input{fig/tkiz/simpleExampleNonlinearBound.tkiz}}%
    \caption{\textbf{Left:} Sample trajectories under the safe control law, when starting on the
		boundary of the safe set $\mSafeSet^P$. Red dots: Data points, black lines: Closed-loop system
		trajectories, elliptic ring $\mSafeRing({\Gamma_1})$, which contains $\partial\mSafeSet^P$.
		Dashed set: Safe set using $Q_{GP}$ from Section
		\ref{sec:safe_active_exploration_for_nonlinear_systems}.
		\textbf{Right:} Quadratic bound $x^\top \mQuadBoundMatrixFunction{\Gamma_i} x$ according to
		Lemma~\ref{lem:quad_bound} (blue) of non-linear term $x^\top P d(x)$ (red).}%
    \label{fig:example}%
\end{figure}

Consider a nonlinear system of the form \eqref{eq:general_system}, where
$A = \begin{psmallmatrix} -1 &2 \\ -3& 4 \end{psmallmatrix}$,
$B=\begin{psmallmatrix} 0.5 \\ -2 \end{psmallmatrix}$,
$d(x) = \begin{psmallmatrix} 0.5x_1^4 \\ 0.35-1.5x_2^3 \end{psmallmatrix}$
with input constraints $|u|\leq 3$ and state constraints $|x|\leq 2$.
The origin is not a stable equilibrium, and is neither an equilibrium point.
For simplicity we consider a grid of data points as illustrated in Figure~\ref{fig:example},
however any data set fulfilling Assumption~\ref{ass:big_data} could be used. The data
was taken inside the set $\lbrace x\in\RR^n|x^\top P x \leq 1 \rbrace$ with
$P = \left(\begin{smallmatrix} 0.7651 & 0.2162 \\ 0.2162 & 0.6481 \end{smallmatrix}\right)$
obtained by solving \eqref{eq:intial_guess_P},
which also corresponds to $\mDataRegionDense$ with $\delta=0.15$.
We apply Theorem~\ref{thm:invariant_set_from_data_lipschitz} by
solving \eqref{eq:quad_bound} and \eqref{eq:general_opt} for $\Gamma_1=[0.9,1]$,
$L_{x^\top P d(x)}(\mSafeRing(\Gamma_1)) \approx 6.02$ and obtain
$ K^* = (0.5261,~2.2953),\gamma^*=1$. The results are illustrated in Figure
\ref{fig:example}. Note that in general the quadratic bound must only hold on
$\partial\mSafeSet(\gamma)$ and could therefore be violated around the origin.

\subsection{Simulation: Safety for autonomous convoys}
Consider a convoy of cars or trucks as depicted in Figure~\ref{fig:car_convoy}.
Given a target velocity $v_{\text{tar}}$ and a possibly small target distance $x_{\text{tar}}$,
the goal is to drive closely behind each other, in order to leverage slipstream effects for efficiency.
We assume that it is possible to overwrite the local controllers, i.e. the acceleration of
car~1, car~3 and car~4 in a centralized way if necessary to ensure safety.
During a supervised observation phase, initial data about
the system is collected. We consider the problem
of finding a safe, centralized control law, and a safe set
such that the cars will not crash,
even if we cannot determine the acceleration of car~2 and car~5.

Let $z_{i+1\rightarrow i}=x_{\text{tar}}-x_{i+1\rightarrow i}$ be the difference
between the target distance $x_{\text{tar}}$ and the actual distance $x_{i+1\rightarrow i}$ 
of car $i+1$ and car $i$. Let $v_i$ be the difference between the target velocity of
the convoy $v_{\text{tar}}$ and the velocity $\bar v_i$ of car $i$.
The dynamics of all cars $i=1,...,5$ are given as
$\dot z_{i+1\rightarrow i} = v_{i+1} - v_i$, $\dot v_{i} = u_i$
where $u_i$ is the applied acceleration. The control law of
car $1$ is given by $u_1(v_1) = -v_1$, of cars $3,4$ by
$u_i(z_{i\rightarrow i-1},v_i) = 0.1z_{i\rightarrow i-1}-0.3v_i$ and
of cars $2,5$ (which we cannot overwrite) by
$u_2(z_{2\rightarrow1},v_2) = \max{\lbrace\min{\lbrace z_{2\rightarrow1}-v_2,0.9\rbrace},-0.9}\rbrace$,
$u_5(z_{5\rightarrow 4},v_5) = \max{\lbrace\min{\lbrace z_{5\rightarrow 4}-v_5,0.9\rbrace},-0.9}\rbrace$,
i.e. they apply a saturated, stabilizing state feedback law and are therefore nonlinear.
The target distance between the cars is 1 meter. In order to avoid a crash the
state constraints are given by $x_{i+1 \rightarrow i}\geq 0$. The maximum acceleration
of cars 1, 3, and 4 is $3~[m/s^2]$, i.e. $|u_i|\leq 3,~ i=1,3,4$.
We are given observations of $\dot z_{2\rightarrow 1},\dot v_2$ and
$\dot z_{5\rightarrow 4}, \dot v_5$ in the interval $[-0.8, 0.8]$ with $\delta = 0.013$.

In Figure~\ref{fig:car_convoy_example}, a numerical simulation under the
resulting safe control law \eqref{eq:safe_control_law} is shown, starting
from the boundary of the safe set with $v_2(0)=0.02~[m/s],x_{2\rightarrow 1}(0)=-0.76~[m]$
and the remaining states equal to zero, which represents the situation that the
second car is too close to the first one and its velocity is slightly higher than the reference
velocity.
As we can see in Figure~\ref{fig:car_convoy_example}, the first car has to accelerate quickly
several times during the first two seconds for safety reasons, since the second car (which cannot be controlled)
would decelerate more as car $3$ would be able to compensate. The same situation occurs at $4.2$ seconds between car $3$ and car $4$ and
at around $10$ seconds between car $1$ and car $2$ again.
After six safety interventions in total, the local controllers of the cars
are able to stabilize the overall system.

\begin{figure}[t]
	\vspace{0.2cm}
	\centering
	\input{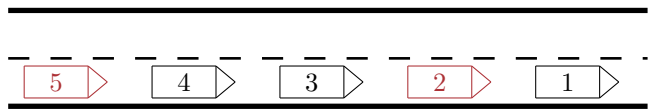}
	\caption{Illustration of the autonomous car convoy. 
	The acceleration of red cars cannot be controlled.}
	\label{fig:car_convoy}
\end{figure}

\begin{figure}[t]
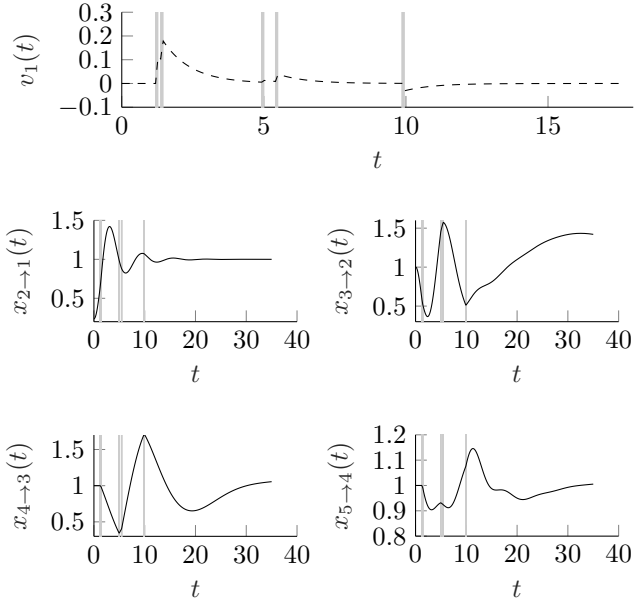

	\centering
	\subfloat{\input{fig/tkiz/FiveCarConvoy1.tkiz}}\\
    \subfloat{\input{fig/tkiz/FiveCarConvoy2.tkiz}}%
    \subfloat{\input{fig/tkiz/FiveCarConvoy3.tkiz}}\\
    \subfloat{\input{fig/tkiz/FiveCarConvoy4.tkiz}}%
    \subfloat{\input{fig/tkiz/FiveCarConvoy5.tkiz}}
	\caption{Sample trajectory of the car convoy
	under the safety framework, when starting on the boundary of $\mSafeSet^P(\gamma)$.
	Grey lines indicate times when the safe control law $u_\mSafeSet$ is applied.}
	\label{fig:car_convoy_example}
\end{figure}

\section{Safe-sets using Gaussian Processes}
\label{sec:safe_active_exploration_for_nonlinear_systems}
The previous sections are based on Lipschitz continuity
of the unknown nonlinearity $d(x)$, which was
incorporated into the quadratic upper bound in \eqref{eq:general_opt},
see Lemma~\ref{lem:quad_bound}.
A shortcoming of using only Lipschitz continuity as `prior' knowledge is the
requirement of relatively dense and structured data, see
Assumption~\ref{ass:big_data}. This implies that
almost no `exploration' can be made beyond the data, observed so far.
By putting a stronger prior on the class of functions for modeling the
nonlinearity $d(x)$, we develop a less conservative quadratic bound,
which can then be used in \eqref{eq:general_opt} and is generally expected to yield a
larger safe set. In addition, it improves exploration beyond the data points allowing to
iteratively improve the safe set during closed-loop operation, where initially few data
points are generally available.

\subsection{Gaussian Processes}
We use a Gaussian process model (GP) in order to perform
Bayesian inference on the unknown nonlinearity
(see. e.g. \cite{rasmussen2006gaussian}) for each element $d_i(x)$ of $d(x)$.
The GP is defined by a mean function $\mu^i(x)$, together with a
covariance (kernel) function $k^i( x,  x')$, denoted with
$\mathcal{GP}(c_\mu^i,k^i)$ for the GP prior on $d_i(x)$ in short.
We set the mean prior function to be constant, i.e. $\mu^i = c_\mu^i$.
Given observations $\bm y_N^i=[y_1^i,..,y_N^i]^T$ at locations $X_N=[ x_1,.., x_N]^T$ where
$y_j^i=d_i(x_j)$, the posterior distribution of $d_i(x)$ is given by
\begin{align}\label{eq:GP_mu}
\mu_N^i( x) &= c_\mu^i + k_N^i(x)^T {K_N^i}^{-1}(\bm y_N^i-\bm c_\mu^i)\\
k_N^i(x,x')&=k^i(x, x')- k_N^i(x)^T {K_N^i}^{-1}k_N^i(x')\\\label{eq:GP_sigma}
{\sigma_N^i}^2(x) &= k_N^i(x, x),
\end{align}
where $k_N^i(x)=[k^i(x_1, x),..,k^i( x_N,x)]^T$,
$k_N^i( x, x')$ is the posterior covariance, ${\sigma_N^i}^2(x)$ is the variance,
$K_N^i=[k^i(x,x')]_{x,x'\in X_N}$ is the positive definite covariance matrix
matrix and $\bm c_\mu^i$ is a vector of $N$ elements, each equal to $c_\mu^i$.
The posterior mean for the vector $d(x)$ is $\mu_N(x)=[\mu_N^0(x),..,\mu_N^n(x)]^\top$
and the variance $\sigma_N^2(x)=[{\sigma_N^0}^2(x),..,{\sigma_N^n}^2(x)]^\top$.

\subsection{GP-based bounding of nonlinearity}
\begin{figure}
\vspace{-0.25cm}
\end{figure}
\input{algorithms/QGP.alg}

The GP model provides a measure for the posterior
variance of $d(x)$, which is used to improve the
bound on the effect of the nonlinearity on the Lyapunov
decrease. Instead of using Lipschitz-based arguments,
as in Section \ref{sec:safe_sets_from_data},
we calculate a strict quadratic bound
on highly probable, worst-case realizations of the nonlinear
term $x^\top P d(x)$ for all
$x\in \mSafeSet^P(\gamma_i^2)\setminus\mSafeSet^P(\gamma_i^1)$,
given an interval $\Gamma_i$. The intervals $\Gamma_i$ in this section are not limited
to a dense data region $\mDataRegionDense$. 
Therefore one can drop the constraint \eqref{eq:intial_guess_P_0} in the computation of $P$
in \eqref{eq:intial_guess_P} and choose
$\bar \gamma = 1$ for the construction of the intervals \eqref{eq:intervals}.
By relying on the GP, the largest interval $\bar \gamma$
can be chosen independent of Assumption~\ref{ass:big_data} and
Remark~\ref{rem:ellipsoidal_data_region}.

Algorithm \ref{alg:calculationQGP} summarizes the calculation of the quadratic bound,
implementing the following idea. Let $f(x)$ be a function
that has to be quadratically upper bounded
by $x^\top Q x$ for all
$x\in \lbrace \mSafeSet^P(\gamma_i^2)\setminus\mSafeSet^P(\gamma_i^1)\rbrace$. 
In order to enforce the infinite dimensional constraint
$\forall x\in \lbrace \mSafeSet^P(\gamma_i^2)\setminus\mSafeSet^P(\gamma_i^1)\rbrace: f(x)\leq x^\top Q x$, we proceed iteratively by
starting with a finite approximation, which will be
improved until $f(x)\leq x^\top Q x$ holds for all $x\in\lbrace \mSafeSet^P(\gamma_i^2)\setminus\mSafeSet^P(\gamma_i^1)\rbrace$.

In line \ref{alg:calculationQGP_lambdaFcn} of Algorithm \ref{alg:calculationQGP},
the function $f$ is defined, which returns the maximum value of the
nonlinear term $x^\top P d(x)$ with a chosen probability, e.g. with $99.73\%$
by letting $c=3$.
Starting with a finite number
of samples of $f(x)$ for $x\in\mSafeSet^P(\gamma_i^2)\setminus\mSafeSet^P(\gamma_i^1)$ (lines \ref{alg:calculationQGP_initialX},\ref{alg:calculationQGP_initialY})
we compute an initial guess for a quadratic bound on 
$x^\top P d(x)$ given by $x^\top Q^1 x$ in
line \ref{alg:calculationQGP_initialQ}, where
$G(X,Y) = \argmin_{\tilde \mQuadBoundMatrix\in \mSetSymMat{n}}
		\sum_{(x_i,y_i)\in(X,Y)} \left(x_i^\top \tilde\mQuadBoundMatrix x_i-y_i\right)^2$
\text{s.t. for all } $(x_i,y_i)\in \mBoundSamples:~y_i \leq x_i^\top \tilde\mQuadBoundMatrix x_i$,
which yields a quadratic upper bound on $\lbrace y_i \rbrace_{i=1}^N$.

We search for potential violations of the current bound in
line \ref{alg:calculationQGP_violatingX} and add
it to the set of data points in lines \ref{alg:calculationQGP_newX}
and \ref{alg:calculationQGP_newY}. After that we update
the quadratic bound in line \ref{alg:calculationQGP_updateQ}.
The algorithm iterates until there is no violation.
This way for all $x\in \mSafeSet^P(\gamma_i^2)\setminus\mSafeSet^P(\gamma_i^1)$, the quadratic
form $x^\top Q_{GP}(\Gamma_i)x$ will be a bound on $x^\top P d(x)$
with high probability. 

\begin{remark}
	The optimization problem
	in line \ref{alg:calculationQGP_violatingX} is continuous
	(compare for example \cite[Chapter 2G]{dontchev2009implicit}),
	but non-convex.
	An alternative approach is to build a discretization of $\mSafeRing(\Gamma_i)$,
	which we denote by $\mDataSetX$, with grid size $\delta$, $|\mDataSetX|=N$, and evaluate
	the posterior mean and covariance $\mu_N(x), \sigma_N(x)$ for each
	$x\in \mDataSetX$.
	By selecting
	$p_k = x_k^\top P \mu_N(x_k) + \beta_N \sum_{i=1}^n \sigma_N^i(x) +
	\delta L_{x^\top P d(x)}(\mSafeRing(\Gamma_i))$ with
	$\beta_N$ as defined in \cite[Lemma 3]{Berkenkamp2017SafeRL}, the bound
	$Q_{GP}$ can be approximated using \eqref{eq:quad_bound} as a
	convex optimization problem.
\end{remark}

For the second step of calculating a safe set size and controller, we can simply use the
bound $Q_{GP}(\Gamma_i)$ instead of $Q(\Gamma_i)$ in \eqref{eq:general_opt} in order
to obtain a set, which is safe with the selected probability.
By construction, $Q_{GP}(\Gamma_i)$ will
be less conservative, or equal than $Q(\Gamma_i)$. This is due to the fact that we put a prior
on the unknown nonlinearity $d(x)$, which allows for Bayesian inference and therefore
improved extra- and interpolation based on the data, as opposed to using estimates based
on Lipschitz continuity. In Figure \ref{fig:example} (left), the benefit of using
$Q_{GP}$ over the Lipschitz based bound $Q$ is illustrated.
 Moreover, the main advantage is that we do
\emph{not} require particular assumptions on the data and the safe set is not
limited to a subset of $\mDataRegionDense$, as it is the case in Lemma~\ref{lem:quad_bound}.

\subsection{Numerical example: Exploration}\label{subsec:exploration}
Consider a nonlinear system of the form \eqref{eq:general_system} with
$A = \begin{psmallmatrix} -1 &2 \\ -3& 4 \end{psmallmatrix}$,
$B=\begin{psmallmatrix} 0.5 \\ -2 \end{psmallmatrix}$,
$d(x) = \begin{psmallmatrix} 0.5x_1^2\sin(6x_1) \\ -0.8x_2^3 \end{psmallmatrix}$,
input constraints $|u|\leq 4$, and state constraints $|x|\leq 4$. We use a squared exponential
kernel as defined in \cite{rasmussen2006gaussian}
with ${\sigma_f^1}={\sigma_f^2}=0.05$ and $l^1=l^2=0.2$.
Given initial data in $[-0.2,0.2]^2$ with $\delta = 0.05$,
we solve \eqref{eq:general_opt} using the high probability ($c=3$, $99.73\%$) bound $Q_{GP}(\Gamma_i)$
with an initial $P$ obtained via \eqref{eq:intial_guess_P} and
$\Gamma_i = [1-i0.1-0.1,1-i0.1]$ for $i=1,2,..,8$,
i.e. we start with $i=1$ and iterate $i=2,3,..$ until we find a feasible solution.
We assume that the desired control input $\bar u(t)$ is given by the policy gradient with signed derivative (PGSD) algorithm
(see \cite{kolter2009policy}), which is a policy search RL method without any safety guarantees. 
During closed-loop operation under the safe control law
\eqref{eq:safe_control_law} we collect data $\mDataSet$. Every $0.2$ seconds we recompute
the safe level set, i.e. $\gamma(t)$, where the safe set size converges after $2.5$ seconds.
In Figure~\ref{fig:example_exploration} the evolution of the safe set size (volume
of the ellipse) is shown as well as the distance of the system state to the origin,
which has to be minimized by the PGSD learning based control law.
The unsafe RL input is `overwritten' three times indicated by the grey lines
to ensure safety, until it begins to converge.

\begin{figure}[t]
    \centering
	\input{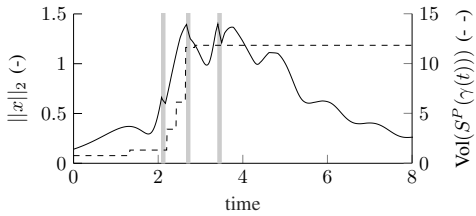}
	\caption{Safe RL using the proposed safety framework together with the PGSD \cite{kolter2009policy} algorithm.
	The distance of the state to the origin, as well as the volume of the
	safe set is shown over time.
	Grey-shaded time points depict application of the safe control law $u_\mSafeSet$.}
    \label{fig:example_exploration}
\end{figure}

\section{Conclusion}\label{sec:conclusion}
This paper presents a safety framework that
allows to enhance arbitrary learning-based and unsafe control strategies,
for nonlinear and potentially larger scale systems with safety certificates. The nonlinearity
is assumed to be unknown and we only require a possibly inaccurate linear system
model and observations of the system. A key feature is that the proposed method directly
exploits the available data, without the need of an additional
learning mechanism. By relying on convex optimization problems, the
proposed method is scalable with respect to the system dimension and number of data points.
In order to reduce conservatism of the safe set calculations, the approach was
extended using a Gaussian process model as prior on the nonlinearity.
This modification enables safe exploration and thereby iterative computation
of the safe set during closed-loop operation.
The results were demonstrated using several numerical example problems,
showing that the safety framework can be used to certify arbitrary and in
particular learning-based controllers.

\bibliography{bibliography.bib}
\bibliographystyle{IEEEtran}

\end{document}